\theoremstyle{plain}
\newtheorem{theorem}{Theorem}[section]
\newtheorem{lemma}[theorem]{Lemma}
\theoremstyle{definition}
\newtheorem{remark}{Remark}
\newcommand{\bpi}{{\boldsymbol{\pi}}}
\newcommand{\bPi}{{\boldsymbol{\Pi}}}
\newcommand{\bX}{{\boldsymbol{X}}}
\newcommand{\bProb}{{\boldsymbol{{q}}}}
\begin{document}

\begin{frontmatter}

\title{Likelihood test in permutations with bias\\
\large \emph{Premier League} and \emph{La Liga}:
surprises during the last 25 seasons}
\runtitle{Permutations with bias}

\author{\fnms{Giacomo} \snm{Aletti}\corref{}\ead[label=e1]{giacomo.aletti@unimi.it}\thanksref{t1}}
\address{ADAMSS Center, Universit\`a degli Studi di Milano, \\
20131 Milano, Italy
\\ \printead{e1}}
\thankstext{t1}{Member of ``Gruppo Nazionale per il Calcolo Scientifico (GNCS)'' 
of the Italian Institute ``Istituto Nazionale di Alta Matematica (INdAM)''.
This work was partially developed during a visiting
research period at the School of Mathematical Science,
Fudan University, Shanghai, China. The author thanks for the hospitality.} 

\runauthor{Giacomo Aletti}

\begin{abstract}
In this paper, we introduce the models of \emph{permutations with bias},
which are random permutations of a set, biased by some preference values. 
We present a new parametric test, together with an efficient way to calculate its
$p$-value. The final tables of the English and Spanish major soccer leagues 
are tested according to this new procedure, to discover whether these results 
were aligned with expectations.
\end{abstract}

\begin{keyword}[class=MSC]
\kwd[Primary ]{62D05}
\kwd[; secondary ]{65C10; 68A55}
\end{keyword}

\begin{keyword}
\kwd{Permutations with bias}
\kwd{Expected preferences}
\kwd{Likelihood ratio test}
\kwd{soccer unpredictability}
\end{keyword}

\end{frontmatter}

\section{Introduction}
Ranks are everywhere in our lives. We rank, we are ranked, and we sometimes depend on ranks.
We rank our interests (conscientiously or not), we rank friends according our preferences, 
we rank colleagues, etc\ldots\
At the same time, we are ranked at high school, during the university carrier (and after). 
In addition, we hope that our team will be well-ranked
at the end of the season, and we continuously reorder our priorities, based, also, on these ranks.
A rank is essentially a permutation of a group of thinks, and usually it is not totally predictable.

In statistics, permutation procedures are becoming more and more popular for 
constructing sampling distributions,
by reordering the observed data. 
Basically, random shuffles of the data are used to get the correct distribution of a 
suitable test statistic under a given null hypothesis. 
It is usually much more computationally intensive than standard statistical tests.
Non-parametric tests are 
often proposed for testing the homogeneity of two or more populations (see, recently,
\cite{Ush17}). For functional data, the importance of the permutation approach
is discussed in, e.g., \cite{Cor2014}.
In \cite{Pes12}, the method of nonparametric combination 
of dependent permutation tests is reviewed together its main properties.
A specific permutation procedure is also used in 
variable selection (see, e.g., \cite{Sab15}).

The idea at the base of the classical permutation procedure is that all the permutations 
are equally likely to be expressed, at least in principle. In other words,
exchangeable-like assumptions are assumed in the sample, under the null hypothesis.
Conversely, in this paper, we work with random permutations of a set
which are assumed to be biased by some ``preference values''.
Consequently, the rank of each objects of the set is expected
to be higher if its preference value is higher, see \cite{Stern90}.
This random procedure models, for example, the final rank of a league, which
is biased by the strength of each team at the beginning of the season.

A similar idea may be found in the context of discrete choice model,
where under study is the process that leads 
to an agent's choice among a set of possible actions,
see \cite{Kenneth09} for a recent book. The agent's preferences
may by inferred by a researcher, by estimating its utility function. 
The final choice is based on these preferences:
the higher the preference is, more likely the corresponding action will be chosen.
This paper extends that idea, 
by considering not only the ``final choice'' of the agent, but all
the rank of the agent, as in the models given, for example, in \cite{Stern90}. 

This framework of discrete choice model
has recently inspired a new technique for random variable
generation, see \cite{A1art}. Here, we use also the idea at the base of this
technique for an exact efficient simulation of the whole process of 
permutation with bias. Based on this result, a likelihood ratio test
may be efficiently defined to test whether the hypothesized preferences
were exact or not, or, in other words, If the final rank 
neglects the expectations. We then apply this new theory to the data 
of two of the most known European soccer league: the Spanish
\emph{La Liga} and the English \emph{Premier League}, by comparing
the final tables of the last $25$ years with the expectations at the beginning of
each year.

The paper is structured as follows. Section~\ref{sec:perm_bias} introduces the methodological novelties
of this paper. At the beginning, we define the model of \emph{permutations with bias},
then we introduce the (parametric) likelihood ratio test together with the definition
of the exact $p$-value of the test, and we conclude the section with the description
of the efficient Monte Carlo procedure to evaluate the $p$-value. Section~\ref{sec:PL_conf}
deals with the application of the methodology. It starts by describing the models 
that describes the link between the team ranking at the beginning of the season, 
and the expected performance of that team at the end of the season. These expected performances
are used as biases in our model, and in the second part of the section we perform
the hypothesis test for each season and for each league, and we present the results.
In Section~\ref{sec:concl} we give the conclusions of the paper, while in the appendix we derive the 
correctness of the Monte Carlo procedure.

\section{Permutations with bias}\label{sec:perm_bias}

In the sequel, the permutations of the set $\{1,\ldots,n\}$ are denoted with bold Greek letters, so that
$\bpi = (\pi_1, \ldots ,\pi_n)$ is such that $\pi_i \in \{1,\ldots,n\}$ for any $i$ and
$\pi_i \neq \pi_j$ if $i\neq j$. We use uppercase bold Greek letters, as $\bPi$, to denote
random objects with value in the set of permutations. 
The vectors as $\bProb =(q_1, \ldots, q_n)$ will be always defined
with strictly positive elements, if not differently stated. Accordingly, it is possible to evaluate the natural
logarithm, denoted here by $\log(\cdot)$, to each of its elements.

At time $t=0$,
we assume to have $n$ different objects, labelled with their natural index $\{1,\ldots,n\}$.
In addition, the sequence $\bProb=({q}_1, \ldots, {q}_n)$ of positive preference values 
is associated to our objects.

We work with \emph{permutations with bias}, that are particular ordered samplings without replacement
of our objects, where the selection probability depends on the preference values. 
The result is a random permutation $\bPi$ with law \eqref{eq:randPerm}, obtained with
the follow procedure.

At each time $t\in \{1, \ldots,n\}$, an object is selected between the existing ones
with probability \emph{proportional} to its preference value ${q}_i$, independently on the past. 
Its label $\pi_t$ is assigned to the $t$-th rank,
and the object is discharged.
At the end, the random permutation $\bpi = (\pi_1, \ldots, \pi_n)$
of the first $n$ numbers is obtained with probability $P_\bProb$ or likelihood $L$ given by
\begin{equation}\label{eq:randPerm}
P_\bProb(\bPi = \bpi ) = \prod_{i=1}^n \frac{ {q}_{\pi_i} }{ \sum_{j=i}^{n} {q}_{\pi_j} } =:
L(\bProb | \bpi).
\end{equation}

\begin{remark}\label{rem:logDependence}
We underline that \eqref{eq:randPerm} is not sensitive to multiplicative factors.
In fact, if ${r}_i = c{q}_i$, then
\[
\prod_{i=1}^n \frac{ {r}_{\pi_i} }{ \sum_{j=i}^{n} {r}_{\pi_j} }
=
\prod_{i=1}^n \frac{ c{q}_{\pi_i} }{ \sum_{j=i}^{n} c{q}_{\pi_j} }
= \prod_{i=1}^n \frac{ {q}_{\pi_i} }{ \sum_{j=i}^{n} {q}_{\pi_j} }.
\]
\end{remark}

\subsection{A likelihood ratio test}

In a permutation with bias, it is possible to define the following likelihood ratio test
\begin{equation}\label{eq:LRtest}
\begin{aligned}
H_0 &: \bProb=\bProb_0,\\
H_1 &: \bProb \neq \bProb_0 ,
\end{aligned}
\end{equation}
where the likelihood ratio test statistic is
\[
\Lambda(\bpi)=
\frac{L(\bProb_0 | \bpi) }{\sup\{ L(\bProb | \bpi ) : \bProb \in\mathbb{R}^n_+\,\}}.
\]
The likelihood ratio is small if the alternative model is better than the null model 
and the likelihood ratio test provides the decision rule as follows:
\begin{align*}
&\text{Do not reject }H_0 & \text{if }\Lambda > c^* ;
\\
&\text{Reject }H_0 & \text{if }\Lambda < c^* ;
\\
&\text{Reject }H_0 \text{ with probability }q& \text{if }\Lambda = c^* .
\end{align*}
By symmetry arguments, it is obvious that $\sup\{ L(\bProb | \bpi ) : \bProb \in\mathbb{R}^n_+\,\}$
is a constant function of $\bpi$, and hence the critical region may be computed with $L(\bProb_0 | \bpi) $
(instead of $\Lambda$) with a constant $c$ (instead of $c^*$). 
The values $c, c^*, q$ are usually chosen to have the desired significance, in that
\[
q P_{\bProb_0}(\Lambda=c^* ) + P_{\bProb_0}(\Lambda < c^* ) = 
q P_{\bProb_0}(L=c ) + P_{\bProb_0}(L < c ) = \text{Significance level of the test}. 
\]
When ${q}_1 = \cdots = {q}_n $, then all the objects are equally likely to be extracted, and,
as expected, we get 
\(P_{{q}_1 = \cdots = {q}_n }(\bPi = \bpi ) = \frac{1}{n!}\), for any $\bpi$. As a consequence, in this uniform case, 
we obtain a test which is independent on the observed $\bpi$.
This is not the case when the terms ${q}_i$ are different, that we will our case study.

Given the ordered sequence \({q}_{\sigma_1} \leq {q}_{\sigma_2} \leq \cdots \leq  {q}_{\sigma_n}\) 
of $\bProb$, 
even if the problem $ \{\bpi \colon L \leq c\} $ is, in general, intractable, it is obvious that
the sequence $({\sigma_1}, \sigma_2, \ldots, \sigma_n)$ belongs to the 
critical region and the sequence $({\sigma_n}, \sigma_{n-1}, \ldots, \sigma_1)$ to the acceptance one.
In fact, for any permutation $\bpi$ and $i = 1,\ldots, n$, we have
\(
\sum_{j=i}^{n} {q}_{\sigma_j} \geq 
\sum_{j=i}^{n} {q}_{\pi_j} \geq
\sum_{j=i}^{n} {q}_{n+1-\sigma_j} 
\) and hence, by \eqref{eq:randPerm},
\[
L(\bProb | ({\sigma_1}, \sigma_2, \ldots, \sigma_n)) = \min_{\bpi^*} L(\bProb | {\bpi^*})
\leq L(\bProb | \bpi) \leq 
\max_{\bpi^*} L(\bProb | {\bpi^*}) =
L(\bProb | ({\sigma_n}, \sigma_{n-1}, \ldots, \sigma_1)) .
\]
However, in principle, once a certain $\bpi^*$ is observed, it is possible to define the $p$-value 
in the classical way
\begin{equation}\label{eq:LRpValue}
p\text{-value} = \sum_{ \bpi \colon L(\bProb | \bpi) < L(\bProb | \bpi^*)  } L(\bProb | \bpi)  
= \sum_{ \bpi \colon L(\bProb | \bpi) < L(\bProb | \bpi^*)  } P_{\bProb}(\bPi = \bpi )  ,
\end{equation}
where $\bProb = \bProb_0$ for the test given in \eqref{eq:LRtest}.


\subsection{Efficient simulation}

To compute \eqref{eq:LRpValue} for a given value of $\bProb$ and an observed sequence $\bpi^*$, 
a Monte Carlo procedure is used here to calculate an approximated empirical $p$-value in the following way.

In the spirit of \cite{Stern90} and, more recently, 
\cite{A1art}, it is possible to generate a random permutation $\bpi$ in the following way.
A random vector $\bX =(X_1, \ldots, X_n)$ with independent components is generated, where 
each $X_i $ is distributed as an exponential random variable with parameter ${q}_i$.
The random permutation is defined as the indexes of the order statistics:
$(X_{\pi_1}, \ldots, X_{\pi_n})=(X_{(1)}, \ldots, X_{(n)})$. 
In the Appendix, we show that this generation has the same law of \eqref{eq:randPerm}
(as also given in \cite[Equation (4)]{Stern90} and in the reference therein):
\begin{equation}\label{eq:geomMin}
P_{\bProb}( X_{\pi_1} < \cdots < X_{\pi_n} ) = P_{\bProb}(\bPi = \bpi ) .
\end{equation}
This result extends also that of \cite{A1art}, where it is shown that $P_{\bProb}(\Pi_1=\pi)= {{q}_{\pi} }/\sum_1^n {q}_i $.
Note that, since we are interested only in the order of the indexes, 
we may simulate $Y_{i} = \log( X_{i} )$, 
and we compare directly $\{Y_{i}, i=1,\ldots,n\}$ (see, again, \cite[Section 4]{Stern90}).
%
To do so, we start with a table of independent uniform
random variables $\{U_{i,m}, i= 1,\ldots, n, m=1,\ldots,M\}$.
Then we compute \(Y^{(m)}_{i} = \log( -\log( U_i) ) -\log({q}_{i}) \), and we register the ordered indexes in 
$\bpi^{(m)}=\{\pi_{i}^{(m)}, i= 1,\ldots, n, m=1,\ldots,M\}$, so that
\[
Y_{\pi_{1}^{(m)}}^{(m)} < Y_{\pi_{2}^{(m)}}^{(m)} < \cdots < Y_{\pi_{n}^{(m)}}^{(m)} , \qquad \text{for any }m=1,\ldots,M.
\]
The $\log$-likelihood $\ell_m = \ell_{\bpi^{(m)}}$ of each simulated sequence is hence registered without the common additive factor
$\sum_i \log({q}_i)$, in the following way\footnote{Many softwares have the built-in function \texttt{cumsum}.
It is more convenient to store $\bpi^{(m)}$ in reverse order, i.e.\ 
$\boldsymbol{p} = ({q}_{\pi_{n}^{(m)}}, {q}_{\pi_{n-1}^{(m)}}, \ldots, {q}_{\pi_{1}^{(m)}} )$, 
and to compute $\ell_m = \texttt{-sum(log(cumsum(} \boldsymbol{p}\texttt{)))}$.}
\[
\ell_{m} = - \sum_{i=1}^n \log\Big(\sum_{j=i}^n {q}_{\pi_{j}^{(m)}} \Big), \qquad \text{for any }m=1,\ldots,M.
\]
The comparison of these latter with
$\ell_{\bpi^*} = - \sum_{i=1}^n \log (\sum_{j=i}^n {q}_{\pi^*_{j}} )$, that is 
computed for the observed sequence $\bpi^*$, gives
\begin{equation}\label{eq:compPvalue}
\hat{q}= \frac{\#\{m \colon \ell_m < \ell_{\bpi^*} \}}{M}.
\end{equation} 

Summing up, we generate i.i.d.\ random permutations $\bpi^{(m)}$ 
with common distribution $\bPi$ (by \eqref{eq:geomMin}) and then we compute
the sequence of \(\{\ell_m, m=1,\ldots,M\}\), which are themselves realizations of i.i.d.\ random variables.
Note that \eqref{eq:compPvalue} gives the empirical $p$-value, since
\[
P_\bProb( \bpi\in\bPi\colon \ell_\bpi < \ell_{\bpi^*} ) = P _\bProb( \bpi\in\bPi\colon L(\bProb | \bpi) < L(\bProb | \bpi^*)  ) =  
\sum_{ \bpi \colon L(\bProb | \bpi) < L(\bProb | \bpi^*)  } P_\bProb(\bPi = \bpi) .
\]

\section{\emph{Premier League} and \emph{La Liga}: season results and expectations}\label{sec:PL_conf}

In this section, we analyse two soccer national leagues from 1992-93 (first \emph{Premier League} season)
to 2016-17, to test whether the final tables 
were expectable or not. For each season, we 
compute the \emph{a priori} expectations of the probability of winning each season for each
team, 
and we compare it with \emph{the final} obtained ranks of the teams. 

\subsection{Elo ratings and expected probability of winning the season}
The \emph{World Football Elo Rating} (ER) is becoming more and more popular due to its significant power of prevision,
see, e.g., \cite{Las13}. ER is based on the Elo rating system 
and includes modifications to take various soccer-specific variables into account.

The difference in the ERs between two teams serves as a predictor of the outcome of a match  with a logistic model. 
In other words, the logarithm of the winning probability of each match is essentially proportional to ER, 
up to a factor for the advantage of the home team.
Obviously, there is more uncertainty in the result of a single game than in 
the averaged result of a season, and hence we must recalibrate the ER based model. 
We show in a moment that the expected winning probability of the season of each team, in logarithm scale, is again proportional to 
its ER.

\begin{figure}
\includegraphics[width=.75\textwidth]{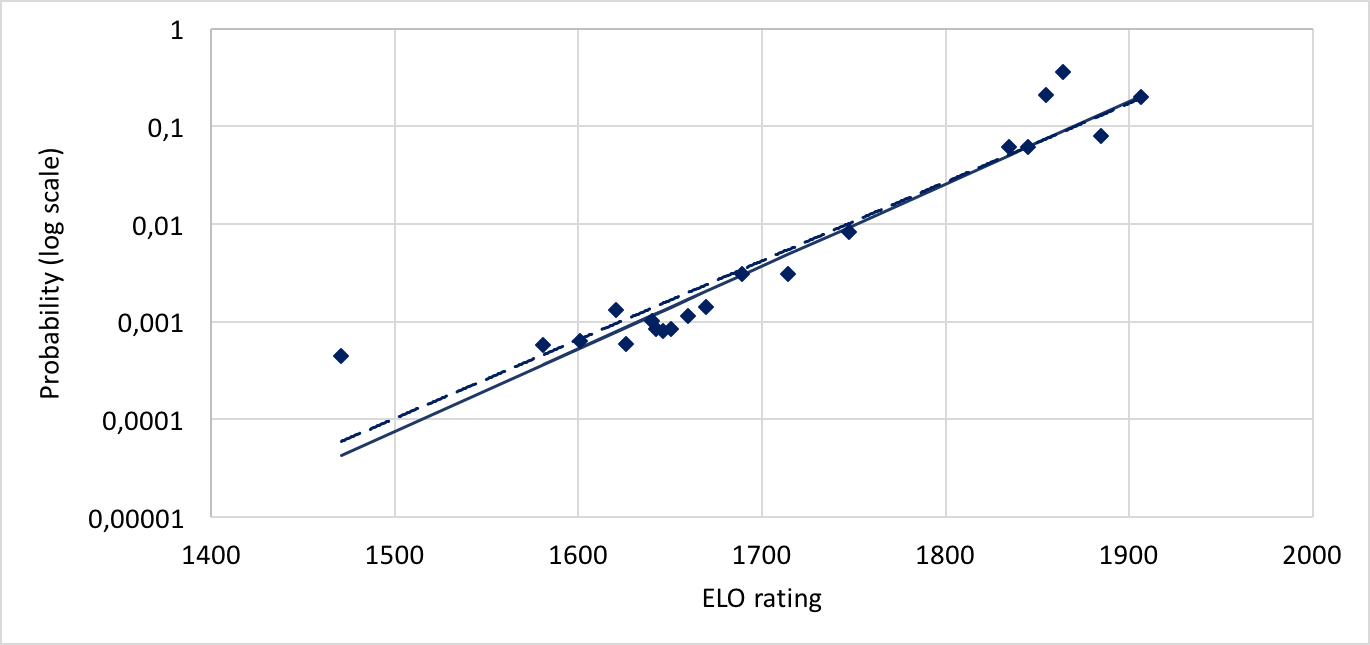}
\caption{Comparison between ELO ratings and logarithm of expected winning probabilities 
based on the odds of the principal online betting players. The ordinary least square linear fitting 
(dashed line) is plotted together with a robust one (solid line).}\label{fig:ELO-lnPB}
\end{figure}

To achieve this task, we have downloaded the odds for the winner of the Premiere League of all the big competitors in the UK online betting system at a day of summer, a quiet period. 
We have computed the averaged expected probability of winning of each team, and we have compared with the correspondent ELO rating. In Figure~\ref{fig:ELO-lnPB}, the scatter plot shows a good linear model
(Multiple R-squared:  0.9026,	Adjusted R-squared:  0.8972,  p-value $< 10^{-9}$). We have calibrated the model with a robust regression fitting (using an M estimator, see \cite{Hub09}) to reduce the contribution
of the evident outlier. Note that the slope parameter is the
sole interesting one, as underlined also in Remark~\ref{rem:logDependence}.

The expectation of the winning probability for each team is hence computed considering its ER at the 1st of October of the corresponding season. In this way, we think to have included the ELO adjustments due to the summer markets, which are reflected in the initial part of the season. Summing up, we are assuming that ERs of 1st of October are good predictions of the 
initial expectations of the people for the teams of that season. The relative expected probabilities of winning each season
are shown in Table~\ref{TabP:EN2}-\ref{TabP:EN1} and in Table~\ref{TabP:ES2}-\ref{TabP:ES1} for the \emph{Premier League} and 
\emph{La Liga}, respectively, together with the ranks obtained by the teams at the end of the season.

\subsection{Unbelievable seasons}
To evaluate the unexpected results of the two national leagues, we have modelled each final season ranks as a
permutation with bias. The likelihood test \eqref{eq:LRtest} is performed, with $\bProb_0$ being the 
relative expected probabilities of winning.
A significant $p$-value (less than $0.05$), computed as in \eqref{eq:compPvalue}, 
reveals that either the expectations were wrong or that the result is highly
surprising. In both cases, from a personal perception, the lower is the $p$-value, the higher is felt strange the final table.
In information theory (see \cite{Pie12}), the fact that an event is \emph{informative} is measured thorough its 
self-information or surprisal, and computed as the opposite of the natural logarithm of the probability of the event. 
The scale is given in the natural unit of information (nat).

In Figure~\ref{fig:Surpr} it is plotted the time series of the surprisals of the $p$-values. As known, 
the result of \emph{Leicester} has made the 2014-15 season exceptional, the third more unpredictable English season
in the Premier League era. It should be stressed that not only the winner, 
but all the teams contribute to the unpredictability of the final table according to their initial strength and final ranks. 
This is the case for the Spanish 2003-04 season, where the debates of both
the \emph{Celta de Vigo} and the \emph{Real Sociedad} ($19$th and $15$th in the final rank, respectively) 
made this season ``unbelievable''.

\begin{figure}
\includegraphics[width=.75\textwidth]{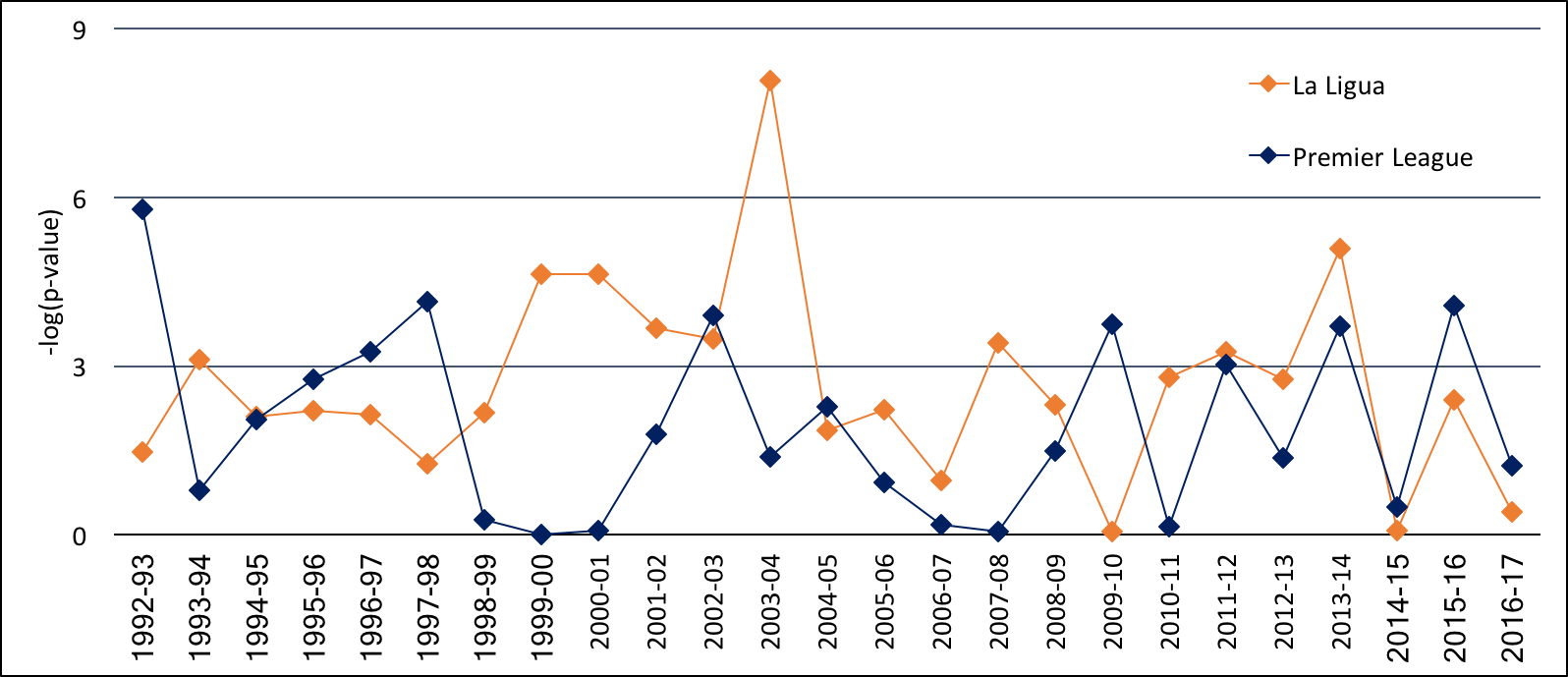}
\caption{Surprisal of the seasons' final tables, plotted as self-information of the $p$-values.
A surprisal of more than 3 correspond to a p-value less than $0.05$.}\label{fig:Surpr}
\end{figure}

\section{Conclusions}\label{sec:concl}
In this paper, we have presented a new test for permutations with bias, 
that are ordered samplings without replacement 
where the selection probability depends on a preference value of each unit.
Since the sample size is given by $n!$ possible permutations and analytic expressions are not given, 
we have provided a method to compute Monte Carlo $p$-values in an efficient way.

As an example, we have studied the results of the Spanish \emph{La Liga} and of the English 
\emph{Premier League}, since the foundation of the latter.
By analysing Elo ranks of the teams at the beginning of each season, 
we could find the rational expectations for the different seasons.
We have tested whether the final tables were in accordance with the expectations, and we found that 
more than $30\%$ of the seasons had unpredictable results, in both the Spanish and English league.
That's soccer!


\appendix
\section{Mathematical derivation of \eqref{eq:geomMin}}
In this Appendix, we give the mathematical proof of the accuracy of our Monte Carlo procedure.
We begin with a lemma.
\begin{lemma}\label{lem:proofEq}
For $\bpi = (\pi_1, \ldots, \pi_n)$,
let $g_\bpi:(0,1)\times \{1,\ldots,n\}\to\mathbb{R}_+$ be the function so defined: 
\[
g_\bpi(u,k) = 
\begin{cases}
\int_0^u {q}_{\pi_k} v^{{q}_{\pi_k}-1} g_\bpi(v,k+1) dv & \text{if }k <n.
\\
\int_0^u {q}_{\pi_n} v^{{q}_{\pi_n}-1} dv & \text{if }k = n.
\end{cases}
\]
Then 
\[
g_\bpi(u,k) = \prod_{i=k}^n \frac{{q}_{\pi_i}u^{{q}_{\pi_i}}}{ \sum_{j=i}^n {q}_{\pi_j} },
\]
and, in particular, $g_\bpi(1,1) = \prod_{i=1}^n \frac{{q}_{\pi_i}}{ \sum_{j=i}^n {q}_{\pi_j} }$.
\end{lemma}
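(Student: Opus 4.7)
My plan is to prove the closed form by backward induction on $k$, starting from $k=n$ and working down to $k=1$; the final claim about $g_\bpi(1,1)$ is then just a specialization.

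For the base case $k=n$, the definition gives $g_\bpi(u,n)=\int_0^u q_{\pi_n}v^{q_{\pi_n}-1}\,dv=u^{q_{\pi_n}}$, which matches the proposed formula since the product over $i=n$ to $n$ has denominator $q_{\pi_n}$ that cancels the numerator factor $q_{\pi_n}$.

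For the inductive step, assume the formula holds at $k+1$ and plug it into the recursion. The key observation is that the product $\prod_{i=k+1}^n q_{\pi_i}v^{q_{\pi_i}}/\sum_{j=i}^n q_{\pi_j}$ collects a clean $v$-power, namely $v^{\sum_{i=k+1}^n q_{\pi_i}}$, while the rest of that product is a constant (in $v$) that pulls out of the integral. Multiplying by $q_{\pi_k}v^{q_{\pi_k}-1}$ from the outer integrand turns the integrand into $q_{\pi_k}v^{S_k-1}$ where $S_k:=\sum_{i=k}^n q_{\pi_i}$. This integrates to $\frac{q_{\pi_k}}{S_k}u^{S_k}$, and $u^{S_k}$ splits back as $\prod_{i=k}^n u^{q_{\pi_i}}$. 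Reassembling gives exactly the factor $q_{\pi_k}u^{q_{\pi_k}}/S_k$ that extends the product from index $k+1$ down to $k$, completing the induction.

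There is no real obstacle here; the only thing to be careful about is bookkeeping the exponents of $v$ so that the sum $q_{\pi_k}-1+\sum_{i=k+1}^n q_{\pi_i}=S_k-1$ comes out correctly, and then recognizing that the resulting $u^{S_k}$ is precisely $\prod_{i=k}^n u^{q_{\pi_i}}$, which is the missing piece that turns the inductive hypothesis (at level $k+1$) into the claimed formula (at level $k$). The special case $g_\bpi(1,1)=\prod_{i=1}^n q_{\pi_i}/\sum_{j=i}^n q_{\pi_j}$ then follows immediately by setting $u=1$ and $k=1$, which matches $L(\bProb\mid\bpi)$ from \eqref{eq:randPerm} and sets up the Appendix's derivation of \eqref{eq:geomMin}.
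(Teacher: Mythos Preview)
Your proof is correct and follows essentially the same approach as the paper: backward induction on $k$, with the base case $k=n$ computed directly and the inductive step obtained by substituting the formula for $g_\bpi(v,k+1)$, factoring out the $v$-independent constants, integrating $v^{S_k-1}$, and regrouping $u^{S_k}$ as $\prod_{i=k}^n u^{q_{\pi_i}}$. The paper's own proof is identical in structure and computation.
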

\begin{proof}
For $k=n$, it is a standard computation. For $k<n$, by backward induction,
\begin{align*}
g_\bpi(u,k) & = \int_0^u {q}_{\pi_k} v^{{q}_{\pi_k}-1} 
\prod_{i=k+1}^n \frac{{q}_{\pi_i} v^{{q}_{\pi_i}}}{ \sum_{j=i}^n {q}_{\pi_j} } dv
\\
& = \frac{\prod_{i=k}^n {q}_{\pi_i}}{ \prod_{i=k+1}^n \sum_{j=i}^n {q}_{\pi_j} } \int_0^u v^{{q}_{\pi_k}-1+\sum_{i=k+1}^n {q}_{\pi_i} } dv
\\
& = \frac{\prod_{i=k}^n {q}_{\pi_i}}{ \prod_{i=k+1}^n \sum_{j=i}^n {q}_{\pi_j} } 
\frac{[v^{\sum_{i=k}^n {q}_{\pi_i}}]^u_0 }{ \sum_{i=k}^n {q}_{\pi_i} } 
\\
& = \frac{\prod_{i=k}^n {q}_{\pi_i} u^{\sum_{i=k}^n {q}_{\pi_i}}}{ \prod_{i=k}^n \sum_{j=i}^n {q}_{\pi_j} } 
= \prod_{i=k}^n \frac{{q}_{\pi_i} u^{{q}_{\pi_i}}}{ \sum_{j=i}^n {q}_{\pi_j} }. &\qedhere
\end{align*} 
\end{proof}
The desired result is a consequence of the previous lemma, as shown below.
\begin{proof}[Proof of \eqref{eq:geomMin}]
We recall that, given a geometric random variable $X$ with parameter $q$, the random variable
$U = \exp(-q X)$ is uniformly distributed on $(0,1)$. 
Accordingly, if we transform the random vector $\bX$, we obtain
so that
\begin{equation*}
P( X_{\pi_1} < \cdots < X_{\pi_n} )
= P\Big( -\frac{\log(U_{\pi_1})}{{q}_{\pi_1}} < \cdots < -\frac{\log(U_{\pi_n})}{{q}_{\pi_n}} \Big)
= P\Big( U_{\pi_1}^{\frac{1}{{q}_{\pi_1}}} > \cdots > U_{\pi_n}^{\frac{1}{{q}_{\pi_n}}} \Big),
\end{equation*}
where $(U_{\pi_1}, \ldots, U_{\pi_n})$ is a vector of i.i.d.\ random variables uniformly distributed on $(0,1)$.
As a consequence, the random vector $(U_{\pi_1}^{\frac{1}{{q}_{\pi_1}}} , \cdots , U_{\pi_n}^{\frac{1}{{q}_{\pi_n}}} )$
has density
\[
f( u_1, \ldots, u_n ) = \prod_{i=1}^n {q}_{\pi_i} u_i ^{{q}_{\pi_i}-1} \mathbbm{1}_{(0,1)}(u_i),
\]
and hence, by Lemma~\ref{lem:proofEq},
\begin{multline*}
P( X_{\pi_1} < \cdots < X_{\pi_n} )
 = \int_0^1 {q}_{\pi_1} u_1^{{q}_{\pi_1}-1} \Big( \int_0^{u_1} {q}_{\pi_2} u_2^{{q}_{\pi_2}-1} 
\Big( \int_0^{u_2} \cdots  du_3 \Big) du_2 \Big) du_1 \\
= g_\bpi(1,1) =P_{\bProb}(\bPi= \bpi) .\qedhere \\
\end{multline*}
\end{proof}


\begin{sidewaystable}
    \centering
    \caption{\emph{Premier League}: relative probabilities of the expectation of winning and final ranks (seasons from 2004-05
    to 2016-17). At bottom: $p$-value of the test \eqref{eq:LRtest}.}\label{TabP:EN2}
\begin{tabular}{|c*{13}{|@{\;}r@{\;}}|}
\hline
Name & 2016-17 & 2015-16 & 2014-15 & 2013-14 & 2012-13 & 2011-12 & 2010-11 & 2009-10 & 2008-09 & 2007-08 & 2006-07 & 2005-06 & 2004-05 \\
\hline
AFC Bournemouth & $  1.91^{[9]}$ & $  1.66^{[16]}$ &&&&&&&&&&&\\
Arsenal & $206.01^{[5]}$ & $136.80^{[2]}$ & $ 303.70^{[3]}$ & $1174.44^{[4]}$ & $224.95^{[4]}$ & $ 109.15^{[3]}$ & $ 290.47^{[4]}$ & $ 979.44^{[3]}$ & $ 622.93^{[4]}$ & $ 781.49^{[3]}$ & $ 591.20^{[4]}$ & $ 378.16^{[4]}$ & $2014.09^{[2]}$ \\
Aston Villa && $  1.00^{[20]}$ & $   4.00^{[17]}$ & $  27.74^{[15]}$ & $  5.20^{[15]}$ & $  17.23^{[16]}$ & $  17.97^{[9]}$ & $  39.59^{[6]}$ & $  54.64^{[6]}$ & $  36.04^{[6]}$ & $  12.86^{[11]}$ & $   4.04^{[16]}$ & $  30.81^{[10]}$ \\
Birmingham City &&&&&&& $   3.44^{[18]}$ & $   2.38^{[9]}$ && $   3.20^{[19]}$ && $   3.25^{[18]}$ & $   9.58^{[12]}$ \\
Blackburn Rovers &&&&&& $   4.79^{[19]}$ & $  10.62^{[15]}$ & $  10.32^{[10]}$ & $  20.49^{[15]}$ & $  46.12^{[7]}$ & $  28.74^{[10]}$ & $   3.98^{[6]}$ & $   9.53^{[15]}$ \\
Blackpool &&&&&&& $   1.01^{[19]}$ &&&&&&\\
Bolton Wanderers &&&&&& $   3.02^{[18]}$ & $   3.40^{[14]}$ & $   6.88^{[14]}$ & $   5.59^{[13]}$ & $  11.71^{[16]}$ & $  32.30^{[7]}$ & $  18.87^{[8]}$ & $  27.57^{[6]}$ \\
Burnley & $  2.66^{[16]}$ && $   1.81^{[19]}$ &&&&& $   2.04^{[18]}$ &&&&&\\
Cardiff City &&&& $   5.07^{[20]}$ &&&&&&&&&\\
Charlton Athletic &&&&&&&&&&& $   3.12^{[19]}$ & $   7.67^{[13]}$ & $  14.23^{[11]}$ \\
Chelsea & $ 47.28^{[1]}$ & $250.54^{[10]}$ & $1447.59^{[1]}$ & $1246.65^{[3]}$ & $443.52^{[3]}$ & $ 436.86^{[6]}$ & $1565.41^{[2]}$ & $3252.79^{[1]}$ & $3013.48^{[3]}$ & $1048.44^{[2]}$ & $1694.72^{[2]}$ & $1609.49^{[1]}$ & $ 821.25^{[1]}$ \\
Crystal Palace & $  3.48^{[14]}$ & $  8.56^{[15]}$ & $   4.00^{[10]}$ & $   1.00^{[11]}$ &&&&&&&&& $   1.00^{[18]}$ \\
Derby County &&&&&&&&&& $   1.00^{[20]}$ &&&\\
Everton & $ 11.66^{[7]}$ & $ 29.57^{[11]}$ & $ 126.12^{[11]}$ & $ 293.77^{[5]}$ & $105.91^{[6]}$ & $  34.13^{[7]}$ & $  27.32^{[7]}$ & $ 127.50^{[8]}$ & $  44.61^{[5]}$ & $  57.15^{[5]}$ & $  16.61^{[6]}$ & $   2.19^{[11]}$ & $  15.78^{[4]}$ \\
Fulham &&&& $  13.36^{[19]}$ & $ 23.94^{[12]}$ & $  17.83^{[9]}$ & $  17.15^{[8]}$ & $  22.30^{[12]}$ & $   3.89^{[7]}$ & $   6.03^{[17]}$ & $  10.08^{[16]}$ & $   3.89^{[12]}$ & $  17.82^{[13]}$ \\
Hull City & $  1.31^{[18]}$ && $   1.13^{[18]}$ & $   1.98^{[16]}$ &&&& $   1.00^{[19]}$ & $   1.77^{[17]}$ &&&&\\
Leicester City & $ 57.86^{[12]}$ & $  5.52^{[1]}$ & $   6.13^{[14]}$ &&&&&&&&&&\\
Liverpool & $ 91.24^{[4]}$ & $ 29.67^{[8]}$ & $ 372.46^{[6]}$ & $ 278.66^{[2]}$ & $ 34.75^{[7]}$ & $  61.43^{[8]}$ & $ 109.64^{[6]}$ & $1722.73^{[7]}$ & $1230.04^{[2]}$ & $ 613.46^{[4]}$ & $ 203.99^{[3]}$ & $  80.17^{[3]}$ & $ 123.58^{[5]}$ \\
Manchester City & $321.27^{[3]}$ & $443.72^{[4]}$ & $1403.61^{[2]}$ & $1134.84^{[1]}$ & $991.19^{[2]}$ & $ 160.04^{[1]}$ & $  74.74^{[3]}$ & $  51.24^{[5]}$ & $  13.32^{[10]}$ & $  16.15^{[9]}$ & $   6.81^{[14]}$ & $  11.84^{[15]}$ & $  10.70^{[8]}$ \\
Manchester United & $ 48.21^{[6]}$ & $168.75^{[5]}$ & $ 154.68^{[4]}$ & $1567.63^{[7]}$ & $894.93^{[1]}$ & $1899.65^{[2]}$ & $1286.50^{[1]}$ & $5887.56^{[2]}$ & $2642.63^{[1]}$ & $1686.90^{[1]}$ & $ 517.62^{[1]}$ & $ 179.29^{[2]}$ & $ 560.64^{[3]}$ \\
Middlesbrough & $  1.00^{[19]}$ &&&&&&&& $   7.22^{[19]}$ & $  10.52^{[13]}$ & $  10.50^{[12]}$ & $   7.05^{[14]}$ & $  19.93^{[7]}$ \\
Newcastle United && $  1.31^{[18]}$ & $   3.16^{[15]}$ & $  18.40^{[10]}$ & $ 42.86^{[16]}$ & $  16.55^{[5]}$ & $   7.93^{[12]}$ && $   5.81^{[18]}$ & $  19.82^{[12]}$ & $  24.51^{[13]}$ & $   9.90^{[7]}$ & $  88.11^{[14]}$ \\
Norwich City && $  3.28^{[19]}$ && $  14.68^{[18]}$ & $  2.47^{[11]}$ & $   1.00^{[12]}$ &&&&&&& $   3.63^{[19]}$ \\
Portsmouth &&&&&&&& $   3.13^{[20]}$ & $  24.20^{[14]}$ & $  30.54^{[8]}$ & $   9.06^{[9]}$ & $   1.54^{[17]}$ & $   7.33^{[16]}$ \\
Queens Park Rangers &&& $   1.00^{[20]}$ && $  1.79^{[20]}$ & $   1.21^{[17]}$ &&&&&&&\\
Reading &&&&& $  3.07^{[19]}$ &&&&& $  11.70^{[18]}$ & $   8.92^{[8]}$ &&\\
Sheffield United &&&&&&&&&&& $   1.00^{[18]}$ &&\\
Southampton & $ 35.94^{[8]}$ & $ 12.37^{[6]}$ & $  31.59^{[7]}$ & $  16.27^{[8]}$ & $  1.00^{[14]}$ &&&&&&&& $   8.17^{[20]}$ \\
Stoke City & $  2.62^{[13]}$ & $  9.76^{[9]}$ & $  11.42^{[9]}$ & $  14.78^{[9]}$ & $  9.93^{[13]}$ & $  13.53^{[14]}$ & $   5.02^{[13]}$ & $   6.01^{[11]}$ & $   1.00^{[12]}$ &&&&\\
Sunderland & $  1.81^{[20]}$ & $  1.12^{[17]}$ & $   4.63^{[16]}$ & $   5.99^{[14]}$ & $ 11.78^{[17]}$ & $   4.76^{[13]}$ & $   4.32^{[10]}$ & $   4.40^{[13]}$ & $   3.42^{[16]}$ & $   3.62^{[15]}$ && $   1.29^{[20]}$ &\\
Swansea City & $  4.42^{[15]}$ & $  8.76^{[12]}$ & $  10.66^{[8]}$ & $  46.29^{[12]}$ & $  5.77^{[9]}$ & $   1.46^{[11]}$ &&&&&&&\\
Tottenham Hotspur & $ 87.42^{[2]}$ & $ 41.32^{[3]}$ & $  61.86^{[5]}$ & $ 624.44^{[6]}$ & $108.38^{[5]}$ & $  84.82^{[4]}$ & $  51.02^{[5]}$ & $  62.78^{[4]}$ & $  12.59^{[8]}$ & $  50.70^{[11]}$ & $  22.64^{[5]}$ & $  11.80^{[5]}$ & $  12.46^{[9]}$ \\
Watford & $  2.59^{[17]}$ & $  1.76^{[13]}$ &&&&&&&&& $   1.13^{[20]}$ &&\\
West Bromwich Albion & $  3.12^{[10]}$ & $  4.41^{[14]}$ & $   4.64^{[13]}$ & $  26.31^{[17]}$ & $ 17.00^{[8]}$ & $   3.75^{[10]}$ & $   4.40^{[11]}$ && $   1.90^{[20]}$ &&& $   1.00^{[19]}$ & $   1.34^{[17]}$ \\
West Ham United & $  3.73^{[11]}$ & $  4.96^{[7]}$ & $   7.49^{[12]}$ & $  14.08^{[13]}$ & $  6.30^{[10]}$ && $   2.47^{[20]}$ & $  14.60^{[17]}$ & $  10.47^{[9]}$ & $   8.40^{[10]}$ & $   2.69^{[15]}$ & $   1.48^{[9]}$ &\\
Wigan Athletic &&&&& $  6.08^{[18]}$ & $   2.90^{[15]}$ & $   1.00^{[16]}$ & $  10.23^{[16]}$ & $   6.97^{[11]}$ & $   3.62^{[14]}$ & $   3.92^{[17]}$ & $   1.53^{[10]}$ &\\
Wolverhampton Wanderers &&&&&& $   2.32^{[20]}$ & $   1.55^{[17]}$ & $   1.61^{[15]}$ &&&&&\\

\hline
$p$-value &
$0.29344$ & $0.01701$ & $0.61163$ & $0.02428$ & $0.25546$ & $0.04817$ & $0.85973$ & $0.02364$ & $0.2249$ & $0.93661$ & $0.837$ & $0.39563$ & $0.10245$ \\
\hline
\end{tabular}
\end{sidewaystable}

\begin{sidewaystable}
    \centering
    \caption{\emph{Premier League}: relative probabilities of the expectation of winning and final ranks (seasons from the foundation
    to 2003-04). At bottom: $p$-value of the test \eqref{eq:LRtest}.}\label{TabP:EN1}
\begin{tabular}{|c*{12}{|@{\;}r@{\;}}|}
\hline
Name & 2003-04 & 2002-03 & 2001-02 & 2000-01 & 1999-00 & 1998-99 & 1997-98 & 1996-97 & 1995-96 & 1994-95 & 1993-94 & 1992-93 \\
\hline
AFC Bournemouth &&&&&&&&&&&&\\
Arsenal & $ 315.86^{[1]}$ & $870.38^{[2]}$ & $ 84.03^{[1]}$ & $317.63^{[2]}$ & $ 505.03^{[2]}$ & $ 58.22^{[2]}$ & $205.46^{[1]}$ & $ 39.95^{[3]}$ & $ 39.25^{[5]}$ & $ 80.24^{[12]}$ & $ 67.78^{[4]}$ & $76.23^{[10]}$ \\
Aston Villa & $   5.94^{[6]}$ & $  5.10^{[16]}$ & $ 15.98^{[8]}$ & $ 21.14^{[8]}$ & $  17.87^{[6]}$ & $ 23.62^{[6]}$ & $ 48.46^{[7]}$ & $ 14.86^{[5]}$ & $ 22.98^{[4]}$ & $  9.90^{[18]}$ & $108.29^{[10]}$ & $13.17^{[2]}$ \\
Barnsley &&&&&&& $  1.00^{[19]}$ &&&&&\\
Birmingham City & $   9.13^{[10]}$ & $  1.00^{[13]}$ &&&&&&&&&&\\
Blackburn Rovers & $  13.52^{[15]}$ & $  7.70^{[6]}$ & $  2.03^{[10]}$ &&& $  2.27^{[19]}$ & $ 55.13^{[6]}$ & $ 14.27^{[13]}$ & $ 47.50^{[7]}$ & $ 79.77^{[1]}$ & $ 32.61^{[2]}$ & $ 1.71^{[4]}$ \\
Bolton Wanderers & $   4.19^{[8]}$ & $  1.53^{[17]}$ & $  2.12^{[16]}$ &&&& $ 15.15^{[18]}$ && $  1.00^{[20]}$ &&&\\
Bradford City &&&& $  1.06^{[20]}$ & $   1.60^{[17]}$ &&&&&&&\\
Charlton Athletic & $   4.74^{[7]}$ & $  3.20^{[12]}$ & $  3.99^{[14]}$ & $  2.35^{[9]}$ && $  1.85^{[18]}$ &&&&&&\\
Chelsea & $ 124.57^{[2]}$ & $ 53.38^{[4]}$ & $ 52.29^{[6]}$ & $ 28.66^{[6]}$ & $ 184.24^{[5]}$ & $ 22.05^{[3]}$ & $ 91.65^{[4]}$ & $ 10.14^{[6]}$ & $ 15.90^{[11]}$ & $ 16.48^{[11]}$ & $ 20.55^{[14]}$ & $ 5.22^{[11]}$ \\
Coventry City &&&& $  1.91^{[19]}$ & $   6.38^{[14]}$ & $  2.78^{[15]}$ & $ 13.17^{[11]}$ & $  1.72^{[17]}$ & $  5.04^{[16]}$ & $  5.71^{[16]}$ & $ 12.87^{[11]}$ & $ 4.56^{[15]}$ \\
Crystal Palace &&&&&&& $  5.34^{[20]}$ &&& $  2.84^{[19]}$ && $ 5.08^{[20]}$ \\
Derby County &&& $  1.28^{[19]}$ & $  2.00^{[17]}$ & $   4.70^{[16]}$ & $  4.05^{[8]}$ & $ 13.79^{[9]}$ & $  1.60^{[12]}$ &&&&\\
Everton & $   8.93^{[17]}$ & $  3.12^{[7]}$ & $  3.63^{[15]}$ & $  4.96^{[16]}$ & $  13.91^{[13]}$ & $  1.52^{[14]}$ & $  9.92^{[17]}$ & $ 17.98^{[15]}$ & $  9.36^{[6]}$ & $  1.65^{[15]}$ & $ 18.43^{[17]}$ & $ 6.40^{[13]}$ \\
Fulham & $  10.61^{[9]}$ & $  5.10^{[14]}$ & $  2.07^{[13]}$ &&&&&&&&&\\
Ipswich Town &&& $  8.92^{[18]}$ & $  5.07^{[5]}$ &&&&&& $  1.73^{[22]}$ & $  6.55^{[19]}$ & $ 1.00^{[16]}$ \\
Leeds United & $   6.69^{[19]}$ & $ 41.21^{[15]}$ & $162.14^{[5]}$ & $ 48.46^{[4]}$ & $ 128.36^{[3]}$ & $  7.40^{[4]}$ & $ 19.34^{[5]}$ & $  2.23^{[11]}$ & $157.99^{[13]}$ & $ 55.27^{[5]}$ & $ 28.36^{[5]}$ & $36.52^{[17]}$ \\
Leicester City & $   2.28^{[18]}$ && $  1.00^{[20]}$ & $  9.38^{[13]}$ & $  16.05^{[8]}$ & $  3.01^{[10]}$ & $ 20.86^{[10]}$ & $  1.00^{[9]}$ && $  1.00^{[21]}$ &&\\
Liverpool & $  83.72^{[4]}$ & $273.92^{[5]}$ & $157.13^{[2]}$ & $ 38.34^{[3]}$ & $  35.24^{[4]}$ & $ 32.11^{[7]}$ & $195.77^{[3]}$ & $133.63^{[4]}$ & $ 88.34^{[3]}$ & $ 27.54^{[4]}$ & $ 32.56^{[8]}$ & $12.59^{[6]}$ \\
Manchester City & $  11.46^{[16]}$ & $  3.98^{[9]}$ && $  1.00^{[18]}$ &&&&& $  2.59^{[18]}$ & $ 10.35^{[17]}$ & $ 26.55^{[16]}$ & $14.47^{[9]}$ \\
Manchester United & $1261.10^{[3]}$ & $368.21^{[1]}$ & $206.47^{[3]}$ & $642.74^{[1]}$ & $1507.90^{[1]}$ & $119.75^{[1]}$ & $862.18^{[2]}$ & $305.69^{[1]}$ & $335.75^{[1]}$ & $320.33^{[2]}$ & $639.19^{[1]}$ & $34.48^{[1]}$ \\
Middlesbrough & $   5.60^{[11]}$ & $  7.94^{[11]}$ & $  3.12^{[12]}$ & $  7.07^{[14]}$ & $   6.70^{[12]}$ & $  2.06^{[9]}$ && $  2.02^{[19]}$ & $  5.38^{[12]}$ &&& $ 1.22^{[21]}$ \\
Newcastle United & $  32.07^{[5]}$ & $ 24.08^{[3]}$ & $  8.45^{[4]}$ & $ 12.33^{[11]}$ & $   4.94^{[11]}$ & $  7.53^{[13]}$ & $349.18^{[13]}$ & $ 87.90^{[2]}$ & $ 88.20^{[2]}$ & $139.16^{[6]}$ & $  7.55^{[3]}$ &\\
Norwich City &&&&&&&&&& $  9.31^{[20]}$ & $ 50.39^{[12]}$ & $ 6.15^{[3]}$ \\
Nottingham Forest &&&&&& $  1.93^{[20]}$ && $  7.69^{[20]}$ & $ 87.82^{[9]}$ & $ 11.28^{[3]}$ && $ 3.73^{[22]}$ \\
Oldham Athletic &&&&&&&&&&& $  4.95^{[21]}$ & $ 3.07^{[19]}$ \\
Portsmouth & $   2.49^{[13]}$ &&&&&&&&&&&\\
Queens Park Rangers &&&&&&&&& $ 17.08^{[19]}$ & $  8.83^{[8]}$ & $ 28.32^{[9]}$ & $16.35^{[5]}$ \\
Sheffield United &&&&&&&&&&& $ 14.60^{[20]}$ & $ 5.17^{[14]}$ \\
Sheffield Wednesday &&&&& $   2.09^{[19]}$ & $  2.55^{[12]}$ & $ 10.85^{[16]}$ & $  3.25^{[7]}$ & $ 11.87^{[15]}$ & $ 23.59^{[13]}$ & $ 17.28^{[7]}$ & $12.52^{[7]}$ \\
Southampton & $  11.97^{[12]}$ & $  4.53^{[8]}$ & $  3.47^{[11]}$ & $  2.43^{[10]}$ & $   6.37^{[15]}$ & $  1.00^{[17]}$ & $  5.77^{[12]}$ & $  2.26^{[16]}$ & $  9.06^{[17]}$ & $  5.06^{[10]}$ & $  4.02^{[18]}$ & $ 4.53^{[18]}$ \\
Sunderland && $  2.40^{[20]}$ & $ 11.81^{[17]}$ & $ 10.06^{[7]}$ & $  45.99^{[7]}$ &&& $  1.59^{[18]}$ &&&&\\
Swindon Town &&&&&&&&&&& $  1.00^{[22]}$ &\\
Tottenham Hotspur & $   2.73^{[14]}$ & $  6.21^{[10]}$ & $  4.24^{[9]}$ & $  5.73^{[12]}$ & $  12.71^{[10]}$ & $  1.79^{[11]}$ & $ 18.08^{[14]}$ & $ 14.83^{[10]}$ & $ 29.58^{[8]}$ & $  3.55^{[7]}$ & $ 29.34^{[15]}$ & $ 3.16^{[8]}$ \\
Watford &&&&& $   1.00^{[20]}$ &&&&&&&\\
West Bromwich Albion && $  1.27^{[19]}$ &&&&&&&&&&\\
West Ham United && $  4.36^{[18]}$ & $  2.51^{[7]}$ & $  6.45^{[15]}$ & $  19.28^{[9]}$ & $  5.16^{[5]}$ & $ 19.23^{[8]}$ & $  4.07^{[14]}$ & $  8.53^{[10]}$ & $  4.24^{[14]}$ & $  4.57^{[13]}$ &\\
Wimbledon &&&&& $   2.27^{[18]}$ & $  3.44^{[16]}$ & $ 20.56^{[15]}$ & $  9.50^{[8]}$ & $ 12.53^{[14]}$ & $ 23.28^{[9]}$ & $ 39.03^{[6]}$ & $ 6.92^{[12]}$ \\
Wolverhampton Wanderers & $   1.00^{[20]}$ &&&&&&&&&&&\\

\hline
$p$-value &
$0.25083$ & $0.02022$ & $0.16761$ & $0.92963$ & $0.99817$ & $0.76497$ & $0.01572$ & $0.03854$ & $0.0628$ & $0.12783$ & $0.45384$ & $0.00309$ \\
\hline
\end{tabular}
\end{sidewaystable}

\begin{sidewaystable}
    \centering
    \caption{\emph{La Liga}: 
    relative probabilities of the expectation of winning and final ranks (seasons from 2004-05
    to 2016-17). At bottom: $p$-value of the test \eqref{eq:LRtest}.}\label{TabP:ES2}
\begin{tabular}{|c*{13}{|@{\;}r@{\;}}|}
\hline
Name & 2016-17 & 2015-16 & 2014-15 & 2013-14 & 2012-13 & 2011-12 & 2010-11 & 2009-10 & 2008-09 & 2007-08 & 2006-07 & 2005-06 & 2004-05 \\
\hline
Alavés & $   2.17^{[9]}$ &&&&&&&&&&& $  1.00^{[18]}$ &\\
Albacete &&&&&&&&&&&&& $  2.55^{[20]}$ \\
Almería &&& $   6.08^{[19]}$ & $   1.51^{[17]}$ &&& $   2.42^{[20]}$ & $   3.41^{[13]}$ & $ 14.23^{[11]}$ & $  1.63^{[8]}$ &&&\\
Athletic Bilbao & $  80.11^{[7]}$ & $  17.39^{[5]}$ & $  48.23^{[7]}$ & $   4.57^{[4]}$ & $   5.64^{[12]}$ & $   12.89^{[10]}$ & $   5.11^{[6]}$ & $   4.16^{[8]}$ & $ 10.72^{[13]}$ & $  3.00^{[11]}$ & $  4.72^{[17]}$ & $  6.91^{[12]}$ & $ 15.58^{[9]}$ \\
Atlético Madrid & $2378.09^{[3]}$ & $ 365.54^{[3]}$ & $1410.15^{[3]}$ & $ 209.95^{[1]}$ & $  99.25^{[3]}$ & $   36.11^{[5]}$ & $  14.24^{[7]}$ & $  25.44^{[9]}$ & $ 96.75^{[4]}$ & $ 24.96^{[4]}$ & $ 19.50^{[7]}$ & $  5.33^{[10]}$ & $ 11.71^{[11]}$ \\
Barcelona & $5620.24^{[2]}$ & $4580.26^{[1]}$ & $3869.10^{[1]}$ & $5419.42^{[2]}$ & $5380.00^{[1]}$ & $11483.46^{[2]}$ & $1928.98^{[1]}$ & $1315.49^{[1]}$ & $237.27^{[1]}$ & $431.44^{[3]}$ & $787.60^{[2]}$ & $195.09^{[1]}$ & $185.73^{[1]}$ \\
Betis &&&&&& $    9.97^{[13]}$ &&& $  9.36^{[18]}$ & $  3.17^{[13]}$ & $  5.11^{[16]}$ & $ 21.37^{[14]}$ & $ 10.30^{[4]}$ \\
Celta de Vigo & $  21.74^{[13]}$ & $  32.01^{[6]}$ & $  26.99^{[8]}$ & $   1.48^{[9]}$ & $   2.64^{[17]}$ &&&&&& $ 13.58^{[18]}$ & $  5.34^{[6]}$ &\\
Cádiz &&&&&&&&&&&& $  1.82^{[19]}$ &\\
Córdoba &&& $   1.00^{[20]}$ &&&&&&&&&&\\
Deportivo La Coruña & $   3.74^{[16]}$ & $   3.35^{[15]}$ & $   1.90^{[16]}$ && $   5.41^{[19]}$ && $   1.65^{[18]}$ & $   9.08^{[10]}$ & $ 17.42^{[7]}$ & $  3.11^{[9]}$ & $ 10.04^{[13]}$ & $ 10.69^{[8]}$ & $ 32.49^{[8]}$ \\
Eibar & $   4.01^{[10]}$ & $   1.49^{[14]}$ & $   2.64^{[18]}$ &&&&&&&&&&\\
Elche &&& $   3.29^{[13]}$ & $   1.00^{[16]}$ &&&&&&&&&\\
Espanyol & $   4.23^{[8]}$ & $   7.03^{[13]}$ & $   6.68^{[10]}$ & $   5.05^{[14]}$ & $   1.47^{[13]}$ & $    5.08^{[14]}$ & $   4.41^{[8]}$ & $   7.22^{[11]}$ & $  7.90^{[10]}$ & $ 13.54^{[12]}$ & $  2.91^{[11]}$ & $ 11.15^{[15]}$ & $ 10.04^{[5]}$ \\
Getafe && $   1.43^{[19]}$ & $   6.20^{[15]}$ & $   3.96^{[13]}$ & $   3.17^{[10]}$ & $    4.39^{[11]}$ & $   8.58^{[16]}$ & $   5.08^{[6]}$ & $ 21.72^{[17]}$ & $  6.08^{[14]}$ & $ 12.29^{[9]}$ & $  6.17^{[9]}$ & $  1.00^{[13]}$ \\
Gimnàstic &&&&&&&&&&& $  1.00^{[20]}$ &&\\
Granada & $   2.25^{[20]}$ & $   1.00^{[16]}$ & $   4.89^{[17]}$ & $   2.76^{[15]}$ & $   1.00^{[15]}$ & $    1.00^{[17]}$ &&&&&&&\\
Hércules &&&&&&& $   2.77^{[19]}$ &&&&&&\\
Las Palmas & $   9.28^{[14]}$ & $   1.87^{[11]}$ &&&&&&&&&&&\\
Leganés & $   2.34^{[17]}$ &&&&&&&&&&&&\\
Levante && $   1.73^{[20]}$ & $   8.58^{[14]}$ & $   4.22^{[10]}$ & $   4.23^{[11]}$ & $   10.42^{[6]}$ & $   1.00^{[14]}$ &&& $  1.00^{[20]}$ & $  1.91^{[15]}$ && $  2.76^{[18]}$ \\
Mallorca &&&&& $  12.24^{[18]}$ & $    5.26^{[8]}$ & $  14.91^{[17]}$ & $  22.18^{[5]}$ & $ 56.13^{[9]}$ & $  6.19^{[7]}$ & $  4.67^{[12]}$ & $  2.59^{[13]}$ & $  5.56^{[17]}$ \\
Murcia &&&&&&&&&& $  1.00^{[19]}$ &&&\\
Málaga & $   8.91^{[11]}$ & $   4.31^{[8]}$ & $  17.20^{[9]}$ & $  19.44^{[11]}$ & $  22.60^{[6]}$ & $   16.59^{[4]}$ & $   3.33^{[11]}$ & $   2.96^{[17]}$ & $  1.00^{[8]}$ &&& $  8.41^{[20]}$ & $  8.08^{[10]}$ \\
Numancia &&&&&&&&& $  1.74^{[19]}$ &&&& $  1.45^{[19]}$ \\
Osasuna & $   1.00^{[19]}$ &&& $   1.41^{[18]}$ & $   4.61^{[16]}$ & $   12.17^{[7]}$ & $   2.63^{[9]}$ & $   5.11^{[12]}$ & $  9.35^{[15]}$ & $ 10.22^{[17]}$ & $ 10.57^{[14]}$ & $  2.70^{[4]}$ & $  9.25^{[15]}$ \\
Racing Santander &&&&&& $    5.42^{[20]}$ & $   2.29^{[12]}$ & $   5.50^{[16]}$ & $ 15.86^{[12]}$ & $  4.37^{[6]}$ & $  2.00^{[10]}$ & $  2.99^{[17]}$ & $  2.41^{[16]}$ \\
Rayo Vallecano && $   3.21^{[18]}$ & $   8.61^{[11]}$ & $   1.63^{[12]}$ & $   1.35^{[8]}$ & $    2.60^{[15]}$ &&&&&&&\\
Real Betis & $   6.69^{[15]}$ & $   3.93^{[10]}$ && $   9.01^{[20]}$ & $   4.13^{[7]}$ &&&&&&&&\\
Real Madrid & $6628.36^{[1]}$ & $3882.49^{[2]}$ & $7037.55^{[2]}$ & $2306.07^{[3]}$ & $3352.33^{[2]}$ & $ 3274.69^{[1]}$ & $ 652.95^{[2]}$ & $ 186.16^{[2]}$ & $595.32^{[2]}$ & $251.50^{[1]}$ & $133.10^{[1]}$ & $107.57^{[2]}$ & $ 43.03^{[2]}$ \\
Real Sociedad & $  16.07^{[6]}$ & $   7.46^{[9]}$ & $  28.13^{[12]}$ & $  24.77^{[7]}$ & $   3.72^{[4]}$ & $    3.79^{[12]}$ & $   1.05^{[15]}$ &&&& $  2.26^{[19]}$ & $  3.80^{[16]}$ & $  8.05^{[14]}$ \\
Recreativo &&&&&&&&& $  6.73^{[20]}$ & $  8.95^{[16]}$ & $  3.90^{[8]}$ &&\\
Sevilla & $  67.23^{[4]}$ & $  81.79^{[7]}$ & $ 143.24^{[5]}$ & $   9.33^{[5]}$ & $  14.15^{[9]}$ & $   34.39^{[9]}$ & $  16.85^{[5]}$ & $ 111.10^{[4]}$ & $244.40^{[3]}$ & $ 61.80^{[5]}$ & $119.42^{[3]}$ & $ 11.26^{[5]}$ & $ 28.54^{[6]}$ \\
Sporting Gijón & $   4.80^{[18]}$ & $   3.10^{[17]}$ &&&& $    4.14^{[19]}$ & $   1.01^{[10]}$ & $   1.04^{[15]}$ & $  1.08^{[14]}$ &&&&\\
Tenerife &&&&&&&& $   1.77^{[19]}$ &&&&&\\
Valencia & $   9.95^{[12]}$ & $  92.03^{[12]}$ & $  95.75^{[4]}$ & $  29.52^{[8]}$ & $  25.07^{[5]}$ & $  113.57^{[3]}$ & $  61.40^{[3]}$ & $  16.98^{[3]}$ & $ 33.03^{[6]}$ & $ 59.19^{[10]}$ & $127.03^{[4]}$ & $ 34.23^{[3]}$ & $333.74^{[7]}$ \\
Valladolid &&&& $   1.97^{[19]}$ & $   4.61^{[14]}$ &&& $   1.88^{[18]}$ & $  7.36^{[16]}$ & $  1.76^{[15]}$ &&&\\
Villarreal & $  44.54^{[5]}$ & $  29.28^{[4]}$ & $  39.35^{[6]}$ & $   8.48^{[6]}$ && $   26.77^{[18]}$ & $  13.09^{[4]}$ & $  19.90^{[7]}$ & $325.93^{[5]}$ & $ 62.77^{[2]}$ & $ 27.58^{[5]}$ & $ 42.58^{[7]}$ & $ 11.24^{[3]}$ \\
Xerez &&&&&&&& $   1.00^{[20]}$ &&&&&\\
Zaragoza &&&&& $   2.66^{[20]}$ & $    6.82^{[16]}$ & $   2.30^{[13]}$ & $   6.17^{[14]}$ && $ 13.83^{[18]}$ & $  5.98^{[6]}$ & $  6.81^{[11]}$ & $  6.91^{[12]}$ \\

\hline
$p$-value &
$0.66473$ & $0.0913$ & $0.93225$ & $0.00616$ & $0.06318$ & $0.03868$ & $0.06037$ & $0.94837$ & $0.09938$ & $0.03276$ & $0.37931$ & $0.10886$ & $0.15635$  \\
\hline
\end{tabular}
\end{sidewaystable}

\begin{sidewaystable}
    \centering
    \caption{\emph{La Liga}: 
    relative probabilities of the expectation of winning and final ranks (seasons from 1992-93
    to 2003-04). At bottom: $p$-value of the test \eqref{eq:LRtest}.}\label{TabP:ES1}
\begin{tabular}{|c*{12}{|@{\;}r@{\;}}|}
\hline
Name & 2003-04 & 2002-03 & 2001-02 & 2000-01 & 1999-00 & 1998-99 & 1997-98 & 1996-97 & 1995-96 & 1994-95 & 1993-94 & 1992-93 \\
\hline
Alavés && $  8.75^{[19]}$ & $ 18.91^{[7]}$ & $ 19.87^{[10]}$ & $  3.84^{[6]}$ & $  2.00^{[16]}$ &&&&&&\\
Albacete & $  1.00^{[14]}$ &&&&&&&& $ 14.59^{[20]}$ & $  4.29^{[17]}$ & $  4.37^{[13]}$ & $   3.30^{[17]}$ \\
Athletic Bilbao & $ 20.24^{[5]}$ & $  3.20^{[7]}$ & $  8.34^{[9]}$ & $ 19.72^{[12]}$ & $ 31.91^{[11]}$ & $ 30.87^{[8]}$ & $ 25.36^{[2]}$ & $  31.29^{[6]}$ & $ 69.50^{[15]}$ & $ 19.13^{[8]}$ & $ 14.20^{[5]}$ & $   6.14^{[8]}$ \\
Atlético Madrid & $  3.55^{[7]}$ & $  4.50^{[12]}$ &&& $ 12.42^{[19]}$ & $ 63.44^{[13]}$ & $ 37.45^{[7]}$ & $ 590.17^{[5]}$ & $ 86.98^{[1]}$ & $ 10.14^{[14]}$ & $ 55.44^{[12]}$ & $ 165.05^{[6]}$ \\
Barcelona & $ 89.35^{[2]}$ & $105.44^{[6]}$ & $ 79.95^{[4]}$ & $ 88.26^{[4]}$ & $365.60^{[2]}$ & $ 84.32^{[1]}$ & $485.47^{[1]}$ & $1218.84^{[2]}$ & $485.30^{[3]}$ & $903.52^{[4]}$ & $567.63^{[1]}$ & $1634.99^{[1]}$ \\
Betis & $ 19.69^{[9]}$ & $ 31.53^{[8]}$ & $  6.16^{[6]}$ && $  6.47^{[18]}$ & $ 11.27^{[11]}$ & $ 35.83^{[8]}$ & $ 118.98^{[4]}$ & $125.68^{[8]}$ & $  5.28^{[3]}$ &&\\
Celta de Vigo & $ 30.65^{[19]}$ & $ 54.20^{[4]}$ & $ 64.62^{[5]}$ & $ 56.76^{[6]}$ & $ 55.26^{[7]}$ & $ 14.13^{[5]}$ & $  8.63^{[6]}$ & $  33.16^{[16]}$ & $ 13.12^{[11]}$ & $  7.05^{[13]}$ & $  2.64^{[15]}$ & $   3.17^{[11]}$ \\
Compostela &&&&&&& $  6.58^{[17]}$ & $  18.71^{[11]}$ & $ 16.52^{[10]}$ & $  1.00^{[16]}$ &&\\
Cádiz &&&&&&&&&&&& $   2.93^{[19]}$ \\
Deportivo La Coruña & $120.79^{[3]}$ & $ 89.61^{[3]}$ & $ 88.91^{[2]}$ & $112.62^{[2]}$ & $ 36.69^{[1]}$ & $ 15.44^{[6]}$ & $ 36.64^{[12]}$ & $ 346.88^{[3]}$ & $448.59^{[9]}$ & $313.07^{[2]}$ & $ 64.34^{[2]}$ & $  10.14^{[3]}$ \\
Espanyol & $  5.13^{[16]}$ & $  5.69^{[17]}$ & $  7.98^{[14]}$ & $ 19.99^{[9]}$ & $ 49.00^{[14]}$ & $ 13.65^{[7]}$ & $  8.26^{[10]}$ & $ 208.69^{[12]}$ & $103.75^{[4]}$ & $  9.55^{[6]}$ && $   4.31^{[18]}$ \\
Extremadura &&&&&& $  1.90^{[17]}$ && $   1.00^{[19]}$ &&&&\\
Hércules &&&&&&&& $   3.56^{[21]}$ &&&&\\
Las Palmas &&& $  3.13^{[18]}$ & $  1.55^{[11]}$ &&&&&&&&\\
Lleida &&&&&&&&&&& $  1.99^{[19]}$ &\\
Logroñés &&&&&&&& $   7.09^{[22]}$ && $  5.52^{[20]}$ & $  3.21^{[16]}$ & $   4.86^{[15]}$ \\
Mallorca & $  6.64^{[11]}$ & $  5.77^{[9]}$ & $ 45.48^{[16]}$ & $ 19.49^{[3]}$ & $ 29.71^{[10]}$ & $ 18.82^{[3]}$ & $  3.04^{[5]}$ &&&&&\\
Murcia & $  2.17^{[20]}$ &&&&&&&&&&&\\
Málaga & $  8.43^{[10]}$ & $ 28.20^{[13]}$ & $ 10.45^{[10]}$ & $ 10.70^{[8]}$ & $  1.48^{[12]}$ &&&&&&&\\
Mérida &&&&&&& $  1.00^{[19]}$ && $  8.50^{[21]}$ &&&\\
Numancia &&&& $  3.10^{[20]}$ & $  1.00^{[17]}$ &&&&&&&\\
Osasuna & $ 10.89^{[13]}$ & $  2.91^{[11]}$ & $  2.67^{[17]}$ & $  1.00^{[15]}$ &&&&&&& $  3.41^{[20]}$ & $  10.98^{[10]}$ \\
Oviedo &&&& $  4.72^{[18]}$ & $  3.09^{[16]}$ & $  2.88^{[14]}$ & $  3.34^{[18]}$ & $  30.91^{[17]}$ & $ 47.18^{[14]}$ & $  8.33^{[9]}$ & $  3.93^{[9]}$ & $   9.27^{[16]}$ \\
Racing Santander & $  4.12^{[17]}$ & $  3.44^{[16]}$ && $  9.84^{[19]}$ & $  4.15^{[15]}$ & $  4.39^{[15]}$ & $  3.16^{[14]}$ & $  19.97^{[13]}$ & $ 13.31^{[17]}$ & $  4.87^{[12]}$ & $  1.00^{[8]}$ &\\
Rayo Vallecano && $ 11.52^{[20]}$ & $  2.48^{[11]}$ & $  8.38^{[14]}$ & $  1.77^{[9]}$ &&& $  11.22^{[18]}$ & $  7.45^{[19]}$ && $  1.82^{[17]}$ & $   1.00^{[14]}$ \\
Real Burgos &&&&&&&&&&&& $   6.10^{[20]}$ \\
Real Madrid & $303.72^{[4]}$ & $314.39^{[1]}$ & $137.65^{[3]}$ & $205.37^{[1]}$ & $105.98^{[5]}$ & $212.02^{[2]}$ & $260.96^{[4]}$ & $ 551.17^{[1]}$ & $414.27^{[6]}$ & $210.53^{[1]}$ & $140.98^{[4]}$ & $ 395.07^{[2]}$ \\
Real Sociedad & $ 87.71^{[15]}$ & $ 22.52^{[2]}$ & $  4.91^{[13]}$ & $ 15.52^{[13]}$ & $ 23.37^{[13]}$ & $ 36.02^{[10]}$ & $ 13.89^{[3]}$ & $ 132.98^{[8]}$ & $ 30.54^{[7]}$ & $  5.44^{[11]}$ & $  8.34^{[11]}$ & $  12.75^{[13]}$ \\
Recreativo && $  1.00^{[18]}$ &&&&&&&&&&\\
Salamanca &&&&&& $  5.62^{[20]}$ & $  2.61^{[15]}$ && $  1.00^{[22]}$ &&&\\
Sevilla & $ 14.56^{[6]}$ & $ 10.94^{[10]}$ & $  2.54^{[8]}$ && $  1.12^{[20]}$ &&& $  31.51^{[20]}$ & $ 91.02^{[12]}$ & $ 32.94^{[5]}$ & $ 12.77^{[6]}$ & $   9.33^{[7]}$ \\
Sporting Gijón &&&&&&& $  1.56^{[20]}$ & $  15.51^{[15]}$ & $  5.34^{[18]}$ & $  3.38^{[18]}$ & $  3.58^{[14]}$ & $  18.63^{[12]}$ \\
Tenerife &&& $  1.00^{[19]}$ &&& $  7.85^{[19]}$ & $ 13.13^{[16]}$ & $ 193.01^{[9]}$ & $ 19.54^{[5]}$ & $ 13.85^{[15]}$ & $ 17.78^{[10]}$ & $   8.56^{[5]}$ \\
Valencia & $101.31^{[1]}$ & $345.94^{[5]}$ & $108.68^{[1]}$ & $324.54^{[5]}$ & $ 29.22^{[3]}$ & $ 22.12^{[4]}$ & $  8.81^{[9]}$ & $ 589.22^{[10]}$ & $ 61.41^{[2]}$ & $ 48.70^{[10]}$ & $116.95^{[7]}$ & $  66.93^{[4]}$ \\
Valladolid & $  8.23^{[18]}$ & $  8.49^{[14]}$ & $  5.91^{[12]}$ & $ 17.97^{[16]}$ & $  9.43^{[8]}$ & $ 12.38^{[12]}$ & $  6.60^{[11]}$ & $  33.91^{[7]}$ & $  2.57^{[16]}$ & $  2.52^{[19]}$ & $  3.00^{[18]}$ &\\
Villarreal & $  9.35^{[8]}$ & $  4.45^{[15]}$ & $ 14.77^{[15]}$ & $  1.87^{[7]}$ && $  1.00^{[18]}$ &&&&&&\\
Zaragoza & $  1.39^{[12]}$ && $  8.48^{[20]}$ & $ 46.12^{[17]}$ & $ 25.12^{[4]}$ & $ 11.76^{[9]}$ & $  5.46^{[13]}$ & $  36.40^{[14]}$ & $ 78.37^{[13]}$ & $ 51.10^{[7]}$ & $  3.22^{[3]}$ & $  17.42^{[9]}$ \\

\hline
$p$-value &
$0.00031$ & $0.0308$ & $0.02556$ & $0.00962$ & $0.00973$ & $0.114$ & $0.2829$ & $0.1186$ & $0.10968$ & $0.1226$ & $0.04398$ & $0.2296$ 
\\
\hline
\end{tabular}
\end{sidewaystable}

\end{document}